
\documentclass{IEEEtran}
\usepackage{amsmath}
\usepackage[utf8]{inputenc} 
\usepackage{epsfig,scalefnt,multirow}
\usepackage{url}
\usepackage{ifthen}
\usepackage{mathtools}
\usepackage{cite}
\usepackage{graphicx}
\usepackage{amssymb}
\usepackage{tabularx}
\usepackage{amsmath}
\usepackage{epstopdf}
\usepackage{epsf}
\usepackage{algorithm}
\usepackage{algpseudocode}
\usepackage{algpascal}
\usepackage{cases}
\usepackage{caption}
\usepackage[labelformat=simple]{subcaption}

\usepackage{stfloats}
\usepackage{float}
\usepackage{xcolor}
\usepackage{tabularx}

\usepackage{epsfig,amsmath,amssymb,epsf,amsthm,scalefnt,multirow}
\usepackage{xcolor}
\usepackage{float}
\usepackage{cite}
\usepackage{psfrag}
\usepackage{gensymb}
\usepackage{cleveref}
\usepackage{mathrsfs}
\usepackage{mathtools}
\usepackage{algpseudocode}
\usepackage{algorithm}
\usepackage{enumerate}

  \def\cC{{\mathcal{C}}} 
   
  \def\cK{{\mathcal{K}}} 
 \def\cN{{\mathcal{N}}} \def\cO{{\mathcal{O}}} 
   
 \def\cV{{\mathcal{V}}}

\def\trace{\mathop{\mathrm{tr}}}

\def\Re{\mathop{\mathrm{Re}}}

\def\bSigma{{\pmb{\Sigma}}} 
 
 \def\bgamma{{\pmb{\gamma}}}

\def\b0{{\pmb{0}}} 

\def\ba{{\mathbf{a}}}   
 \def\bff{{\mathbf{f}}}  \def\bh{{\mathbf{h}}}
   
\def\bm{{\mathbf{m}}}   
  \def\bs{{\mathbf{s}}} 
\def\bu{{\mathbf{u}}}   \def\bx{{\mathbf{x}}}

\def\bA{{\mathbf{A}}}   
 \def\bF{{\mathbf{F}}}  \def\bH{{\mathbf{H}}}
\def\bI{{\mathbf{I}}}   
\def\bM{{\mathbf{M}}}   
 \def\bR{{\mathbf{R}}}

\DeclarePairedDelimiter\norm{\lVert}{\rVert}

\newtheorem*{lemma*}{Lemma}

\newtheorem{proposition}{Proposition}



\begin{document}
	%

	\title{Smart Resource Allocation at mmWave/THz Frequencies with Cooperative Rate-Splitting}

	%
	%
	%
	
			\author{Hyesang Cho and~Junil Choi
		\thanks{		
		This work was supported by Institute of Information \& Communications Technology Planning \& Evaluation (IITP) grant funded by the Korea government (MSIT) (No.2021-000269, Development of sub-THz band wireless transmission and access core technology for 6G Tbps data rate) and the MSIT under ITRC (Information Technology Research Center) support program (IITP-2024-2020-0-01787) supervised by IITP.	
		
		H. Cho and J. Choi are with the School of Electrical Engineering, Korea Advanced Institute of Science and Technology, Daejeon 34141, South Korea (e-mail: \{nanjohn96, junil\}@kaist.ac.kr).
		}
	}

	\maketitle
	

	\begin{abstract} \label{sec:abs} 
		In this paper, we propose algorithms to minimize the energy consumption in millimeter wave/terahertz multi-user downlink communication systems. 
		To ensure coverage in blockage-vulnerable high frequency systems, we consider cooperative rate-splitting (CRS) and transmission over multiple time blocks, where via CRS, multiple users cooperate to assist a blocked user.
		Moreover, we show that transmission over multiple time blocks provides benefits through smart resource allocation.
		We first propose a communication framework named improved distinct extraction-based CRS (iDeCRS) that utilizes the benefits of rate-splitting.
		With our transmission framework, we derive a performance benchmark assuming genie channel state information (CSI), i.e., the channels of the present and future time blocks are known, denoted as GENIE.
		Using the results from GENIE, we derive a novel efficiency constrained optimization (ECO) algorithm assuming instantaneous CSI.
		In addition, a simple but effective even data transmission (EDT) algorithm that promotes steady transmission along the time blocks is proposed.
		Simulation results show that ECO and EDT have satisfactory performances compared to GENIE.
		The results also show that ECO outperforms EDT when many users are cooperating, and vise versa.
	\end{abstract}
	
	
	\begin{IEEEkeywords} \label{sec:key}
		MmWave/THz communication, cooperative communication, rate-splitting multiple access, energy consumption minimization, multiple time block communication, resource allocation
	\end{IEEEkeywords}
	


	
	\section{Introduction}\label{sec:intro}
	Increasing demands of high data rates for mobile devices have triggered a variety of new fields in wireless communication such as massive multiple-input multiple-output (MIMO), reconfigurable intelligent surfaces (RISs), and cell-free communication \cite{mmWave1, Tech1}.
	Among emerging technologies, a simple but powerful approach is to increase the carrier frequency into millimeter wave (mmWave) and terahertz (THz) frequencies \cite{mmWave2,THz1,THz2}.
	By using the broad and vacant frequency bands, mmWave and THz communication systems can achieve extremely high data rates, where a simple THz communication system has achieved a $100$ Gbps data rate in \cite{THzbad3}.
	
	Although with their strengths, high frequency systems have their downsides. 
	Due to the high frequency signals, the multi-path effect decreases substantially, leading to line-of-sight (LoS) dominant channels \cite{Chaccour2022}. 
	Hence, if the LoS link between an access point (AP) and a user equipment (UE) is blocked, serving the UE directly is not possible.
	The high frequency signals also experience extreme propagation loss, and the hardware specifications indicate that the transmitters for THz frequency signals will have low transmit power \cite{THzbad3, THzbad4, THzbad5}.
	Considering the features of high frequency systems, we can expect that restricted coverage will be a major challenge.
	
	To overcome this issue, we consider cooperative communication. 
	Cooperative communication is a concept where the UEs cooperate with each other to improve the overall performance of the system, e.g., the UEs relay or share information \cite{Coop1, Coop2}.
	Different from relaying systems, cooperative communication does not use dedicated relays but uses the UEs for cooperation.
	Therefore, the cooperating UEs not only relay information for other UEs, but receive information for themselves.
	While the cooperating UEs must use their resources for other UEs, the overall performance of the system can improve through cooperation.
	This is desired when a single user owns multiple UEs, where then the cooperation among the UEs will benefit the user.
	Even when the UEs belong to different users, cooperation can still be encouraged by incentivizing the UEs or by fulfilling a common task, e.g., charge lower fees to cooperating UEs or operate a common task in military communication \cite{Choi2015}.
	
	To improve the performance of cooperative communication, we utilize the concept of rate-splitting multiple access (RSMA) \cite{RSMA6}.
	RSMA is a multiple access method gaining attention owing to its superior characteristics compared to spatial division multiple access (SDMA) or non-orthogonal multiple access (NOMA) \cite{SDMA1, NOMA1}.
	When serving multiple UEs, SDMA treats the interference as noise while NOMA uses successive interference cancellations (SICs) to fully decode and remove the interference.
	RSMA partially considers the interference as noise and partially decodes the interference, finding the balance in between.
	In specific, the AP divides the messages for the UEs into private and common parts and concatenates the common parts into a common message.
	The AP then encodes the private parts and common message into private and common streams and transmits the streams with linear precoding.
	After reception, the UEs decode the common stream and perform SIC to remove the effect of the common stream. 
	Finally, each UE decodes its private stream with decreased interference.
	By controlling the ratio between the private and common streams, the amount of interference that is decoded or considered as noise can be controlled.
	Thus, RSMA contains SDMA and NOMA as its extreme cases and has superior performance compared to SDMA and NOMA \cite{RSMA1}.
	Due to its superior features, RSMA has been studied through various areas \cite{RSMAMas1,RSMAMas2,RSMAMas3,RSMAMas4,RSMAmm1,RSMAmm2,RSMARIS1}.
	Massive MIMO systems considering imperfect channel state information (CSI), cell-free communication, and mobility were explored in \cite{RSMAMas1,RSMAMas2,RSMAMas3,RSMAMas4}.
	Works considering mmWave systems with RSMA including hybrid beamforming and limited CSI were studied in \cite{RSMAmm1,RSMAmm2}.
	An overview of the early attempts and motivation to consider an RIS and RSMA integrated system was provided in \cite{RSMARIS1}.

	The combination of rate-splitting and cooperative communication called cooperative rate-splitting (CRS) was also explored in \cite{Ours, RSMA2, RSMA3 ,RSMA7}.
	In a cooperative communication system, cooperating UEs assist other UEs by using their resources to relay data to the assisted UEs, while also obtaining their own data.
	As a result, the cooperating UEs decode multiple data streams as they must obtain the data for themselves and for the assisted UEs.
	To efficiently separate the multiple streams, the cooperating UEs require SICs even without rate-splitting. 
	The benefit of combining cooperative communication and rate-splitting then appears, as it does not increase the hardware burden of the cooperating UEs.

	In this paper, we consider a mmWave/THz multi-user (MU) multiple-input single-output (MISO) downlink system.
	Considering the characteristics of high frequency signals, we assume only LoS links for all channels.
	There exists a single UE, denoted as a destination UE (dUE), that has a permanently blocked AP-dUE link.
	The other UEs, denoted as medium UEs (mUEs), are assumed to be mobile.
	Due to the mobility of the mUEs, there exist instances when the AP-mUE or mUE-dUE LoS links are blocked, where we express this feature through a probabilistic LoS model.	
	To successfully serve the dUE, we consider extraction-based CRS (eCRS) in \cite{Ours}, a two-phase cooperative communication framework that utilizes rate-splitting.
	In the first phase, the AP transmits messages to the mUEs.
	The mUEs then extract the message for the dUE from the received messages and transmit the extracted message to the dUE in the second phase.
	This extraction process allows more efficient use of resources compared to the conventional CRS studied in \cite{RSMA2, RSMA3 ,RSMA7}, where the mUEs just relay the common message that also includes information not intended to the dUE.
	For our study, we focus on minimizing the energy consumption with quality-of-service (QoS) constraints.
	The contributions of this paper are listed as follows:
	\begin{itemize}
		\item Due to the probabilistic AP-mUE and mUE-dUE links, we consider transmission over multiple time blocks.
		Through this approach, we can serve all the mUEs and dUE while respecting their delay requirements, whereas communication in a single time block would make serving the blocked mUEs difficult.
		Furthermore, considering multiple time blocks provide an additional benefit of channel diversity. 
		Even with LoS links, the quality of communication depends on the channel gains, e.g., more data can be sent with the same power when the channel gain is larger.
		By considering multiple time blocks, smart resource allocation is possible compared to the single time block case.
		
		\item We propose a new transmission framework named improved distinct eCRS (iDeCRS).
		iDeCRS is a modified version of DeCRS, which is a special case of eCRS in \cite{Ours}.
		By discarding some existing streams and introducing a new common stream, we modify DeCRS so that iDeCRS can take further advantage of rate-splitting with the same hardware constraints compared to the existing DeCRS framework.
		
		\item We formulate and solve an energy consumption minimization problem by assuming genie CSI, denoted as GENIE.
		The genie CSI assumption implies knowing the channels of all the time blocks, which implies the knowledge of the future channels.
		By using the knowledge of the future channels, we derive a lower bound for the energy consumption minimization problem.
		In addition to being used as a performance benchmark, the results of GENIE are also used to assist other proposed techniques.

		\item We propose algorithms to minimize the energy consumption by assuming instantaneous CSI, i.e., knowing only the current channels.
		Different from GENIE, the main challenge of the instantaneous CSI assumption is to allocate the resources efficiently with the lack of future CSI.
		To implement this, we propose two techniques.
		First, we propose efficiency constrained optimization (ECO).
		ECO uses a novel concept of adopting an efficiency constraint in the problem formulation.
		As a result, ECO motivates data transmission until a certain efficiency threshold.
		To the best of our knowledge, this is the first work to address an energy efficiency constrained communication system. 
		Second, we propose even data transmission (EDT).
		EDT is an approach where the data is transmitted as evenly as possible throughout the multiple time blocks.		
		Simulation results show that ECO outperforms EDT when there are many UEs or when the number of time blocks is small, and vise versa.
	\end{itemize}
	
	The rest of paper is organized as follows.
	Section \ref{sec:model} delineates the system model and the proposed iDeCRS framework.
	In Section \ref{sec:GENIE}, we formulate and solve the energy consumption minimization problem with the genie CSI assumption.
	In Section \ref{sec: INST}, we propose and solve ECO and EDT based on the instantaneous CSI assumption.
	Section \ref{sec: analysis} provides initialization techniques and convergence/complexity analyses for the proposed algorithms.
	Section \ref{sec: simul} shows the simulated results of the proposed algorithms.
	Finally, Section \ref{sec:concl} concludes the paper.
	
	\textbf{Notation:} Lower and upper boldface letters represent column vectors and matrices. $\bA^{*}$  and $\bA^{\mathrm{H}}$ denote the conjugate, and conjugate transpose of the matrix $\bA$. 
	${\mathbb{C}}^{m \times n}$ and ${\mathbb{R}}^{m \times n}$ represent the set of all $m \times n$ complex and real matrices. 
	$|{\cdot}|$ denotes the amplitude of the scalar, and $\norm{\cdot}$ represents the $\ell_2$-norm of the vector. 
	$\mathcal{O}$ denotes the Big-O notation. The Kronecker product is denoted by $\otimes$.
	$\boldsymbol{0}_m$ and $\boldsymbol{1}_m$ are used for the $m\times1$ all zero and all one vectors, and $\bI_m$ denotes the $m \times m$ identity matrix.
	$\cC \cN(\bm,\bSigma)$ denotes the circularly symmetric complex Gaussian distribution with mean $\bm$ and variance $\bSigma$.

	\section{System Model}\label{sec:model}
	In this paper, we consider a MU-MISO downlink communication system at mmWave/THz frequencies.
	The system consists of a single AP with $N$ antennas and $(K+1)$ single-antenna UEs.
	Considering the characteristics of the high frequency signals, the channels are assumed to be LoS dominant and vulnerable to blockage.
	Among the $(K+1)$ UEs, we assume a fixed UE, denoted as a dUE, which has a permanently blocked LoS link with the AP.
	The other $K$ UEs, denoted as mUEs, are assumed to be mobile and have AP-mUE and mUE-dUE links that are blocked with a probability\footnote{While the probability of blockage may differ for each link, we assume the same probability $p$ for all links for simplicity.
	Albeit, our proposed techniques delineated afterwards can be applied when the blockage probability differs for each link without any modification.} $p$.
	The $k$-th mUE and dUE have throughput constraints of $\bar{D}_k$ and $\bar{D}_\mathrm{d}$, respectively, which must be satisfied within a delay requirement of $T$ time blocks. In theory, the considered system subsumes the multiple access or relay scenarios by taking $D_\mathrm{d} = 0$ or $D_k = 0, k \in \left\{1, \cdots,K \right\}$, respectively.
		
	Due to the permanently blocked AP-dUE link, the AP cannot communicate directly to the dUE. 
	In order to alleviate this issue, we use DeCRS in \cite{Ours}, which is a cooperative communication framework that works in two phases and exploits rate-splitting.
	The two-phase system model is shown in Fig. \ref{fig:System}.
	In the first phase, the AP transmits private and common streams to the mUEs, where the streams contain the messages for the mUEs and dUE.
	In the second phase, the mUEs decode the common stream, followed by decoding the private stream after performing SIC to reduce interference. 
	Then, the mUEs extract the message for the dUE from the decoded streams and transmit to the dUE. 
	By modifying DeCRS, we propose iDeCRS that fully exploits the potential of rate-splitting with the same hardware conditions.
	
 	\begin{figure}[t] 
		\centering
		\includegraphics[width=1 \columnwidth]{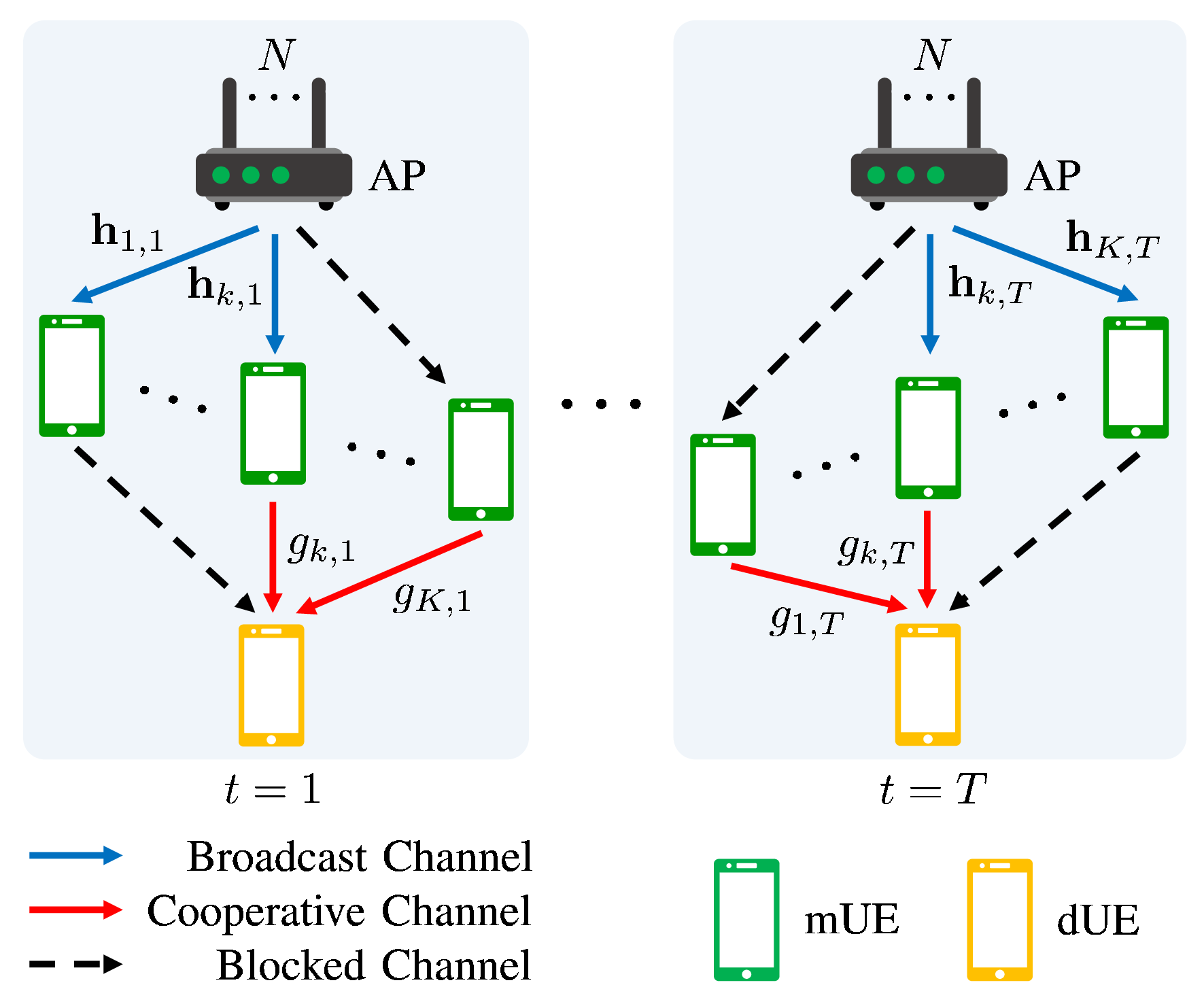}
		\caption{Two-phase cooperative communication system over $T$ time blocks.} 
		\label{fig:System}
	\end{figure}
	
	\subsection{Improved DeCRS (iDeCRS)}
	In this subsection, we focus on the transmission at the $t$-th time block. 
	In the first phase, the AP can only transmit to the mUEs that have AP-mUE LoS links, denoted as $\cK_t = \{1,2,... ,K_t\}$.
	Hence, the AP transmits messages $\left\{W_{k,t}\right\}_{k \in \cK_t}$ and $W_{\mathrm{d},t}$ to the mUEs in the first phase, where $W_{k,t}$ and $W_{\mathrm{d},t}$ are the messages for the $k$-th mUE and dUE, respectively.
	The message $W_{k,t}$ is split into a private part $W_{\mathrm{p},k,t}$ and a common part $W_{\mathrm{c},k,t}$ through a message splitter.
	The dUE message $W_{\mathrm{d},t}$ is first split into $K_t$ messages $\left\{W_{\mathrm{d},k,t}\right\}_{k \in \cK_t}$, where $W_{\mathrm{d},k,t}$ is dedicated for the $k$-th mUE.
	Similar to $W_{k,t}$, $W_{\mathrm{d},k,t}$ is split into a private part $W_{\mathrm{d,p},k,t}$ and a common part $W_{\mathrm{d,c},k,t}$.
	The private parts $W_{\mathrm{p},k,t}$ and $W_{\mathrm{d,p},k,t}$ are combined into a private message via a message combiner and encoded into a private stream $s_{k,t}$ through a message encoder.
	The term \textit{private stream} in our paper implies that the stream is decoded by a single UE.
	The common parts $\{W_{\mathrm{c},k,t}\}_{k \in \cK_t}$ and $\{W_{\mathrm{d,c},k,t}\}_{k \in \cK_t}$ are combined into a common message and encoded into a common stream $s_{\mathrm{c},t}$.
	As a result, the $k$-th private stream $s_{k,t}$ is intended for the $k$-th mUE, and the common stream $s_{\mathrm{c},t}$ is intended for all $K_t$ mUEs.
	
	\begin{figure}%
		\subfloat[The iDeCRS framework at the AP-($k$-th mUE) link in the first phase.]{{\includegraphics[width=0.5\textwidth ]{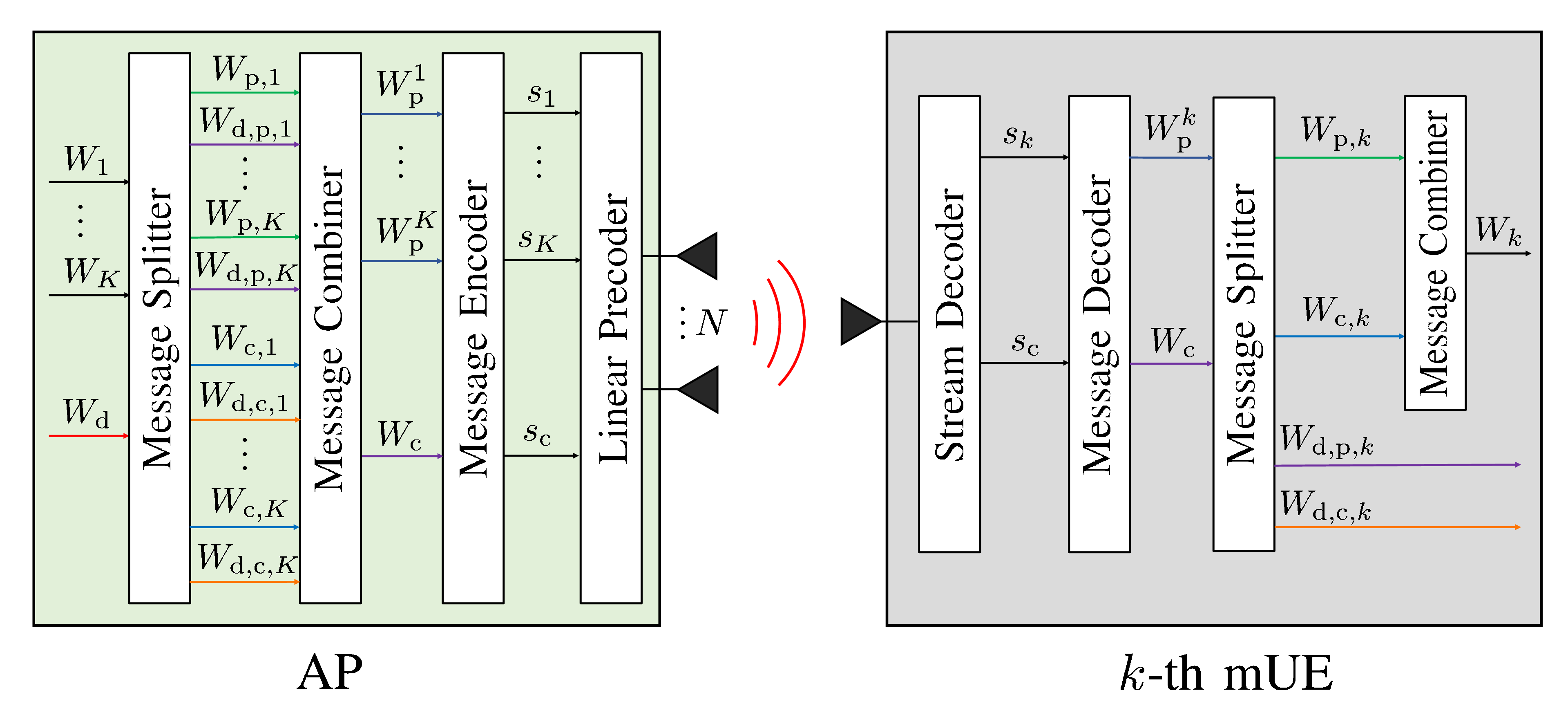}}}%
		\hfil
		\subfloat[The iDeCRS framework at the mUE-dUE link in the second phase.]{{\includegraphics[width=0.5\textwidth ]{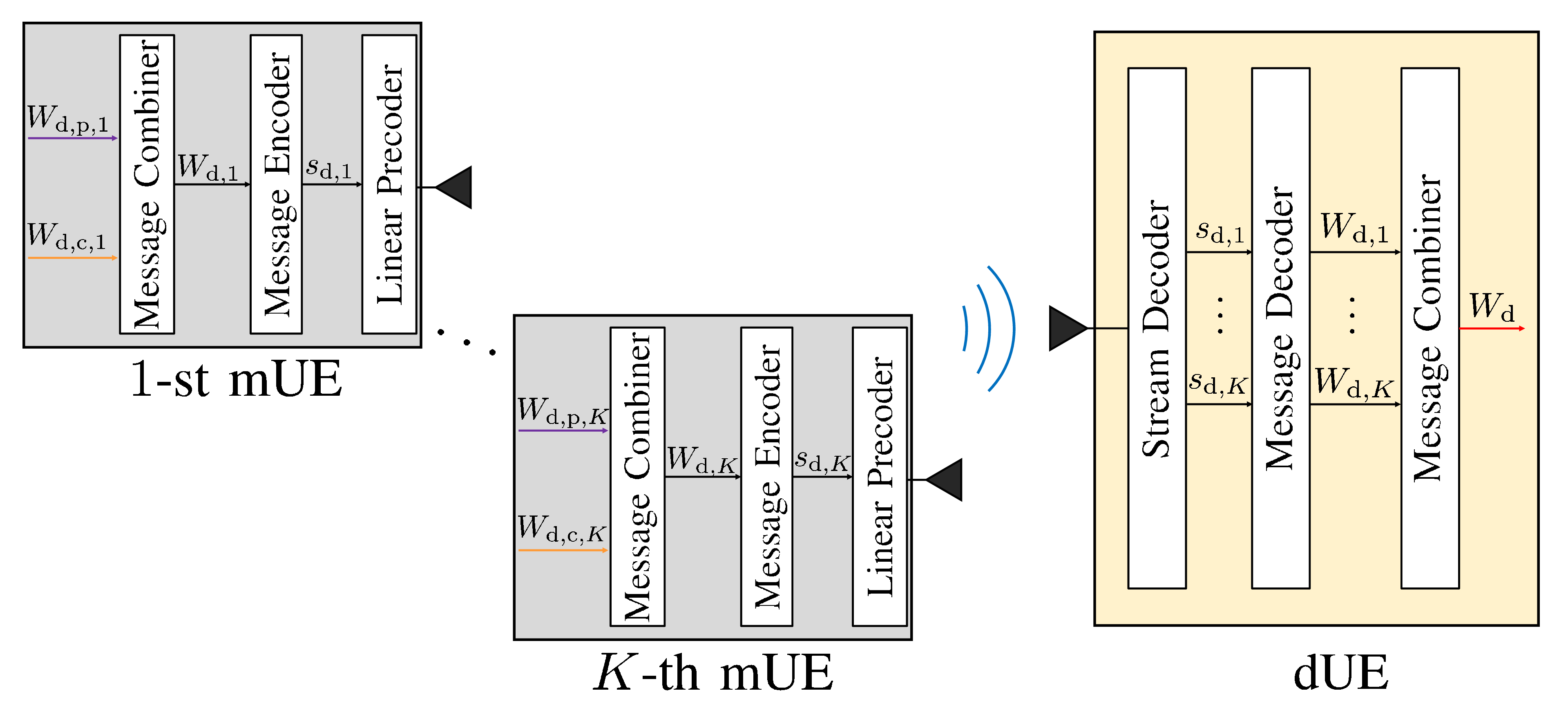} }}%
		\caption{Block diagram of the iDeCRS framework.}%
		\label{fig:iDeCRS}%
	\end{figure}
	
	To provide additional insight, we show that the first phase of the iDeCRS framework boils down to the conventional 1-layer RSMA, where $1$-layer RSMA is a commonly used RSMA framework that requires one SIC and transmits a common message for all UEs \cite{RSMA6}.
	If we consider $W_{k,t}$ and $W_{\mathrm{d},k,t}$ as a single message $W'_{k,t}$, the first phase transmission is effectively identical to the scenario when there is no dUE, i.e., a multiple access scenario without relaying.
	Furthermore, it is clear that using $1$-layer RSMA for the messages $W'_{k,t}$ is identical to the message split of the iDeCRS framework.
	Through this interpretation, we can conclude that the iDeCRS framework enjoys the advantages of $1$-layer RSMA.
	The overall message split of the iDeCRS framework is depicted in Fig. \ref{fig:iDeCRS}.

	\textit{Remark 1:} Compared to the original DeCRS framework, iDeCRS will have a strict improvement in performance.
	The original DeCRS framework utilizes $2K$ streams where there are $K$ common and $K$ private streams.
	For clarity, we denote these streams as \textit{common streams of DeCRS} and \textit{private streams of DeCRS}.
	The $k$-th common stream of DeCRS contains the messages for the $k$-th mUE and dUE, while the $k$-th private stream of DeCRS contains the message for the $k$-th mUE.
	By the definition of DeCRS \cite{Ours}, the $k$-th common and $k$-th private streams of DeCRS are decoded only by the $k$-th mUE, which means that no stream is decoded by multiple mUEs.
	Since each data stream of DeCRS is decoded by only a single mUE, the DeCRS framework does not obtain additional gain from the private streams of DeCRS.
	Effectively, this is identical to a case where a single data stream is split into two separate streams, which does not fully exploit the RSMA features.\footnote{Albeit, DeCRS still has its unique contribution of splitting the dUE message to improve the performance of cooperative communication.}
	Note that, this is not the case for the general eCRS framework, where in general the common streams are decoded by multiple mUEs.
	Therefore, we discard the $K$ private streams of DeCRS and include a single new stream that is decoded by multiple mUEs.
	As a result, the $K$ common streams of DeCRS is identical to the $K$ private streams of iDeCRS and the new common stream of iDeCRS is included, strictly increasing the performance from DeCRS.
		
	After constructing the private and common streams, the AP transmits the streams with linear precoding as
	\begin{align}
		\bx_t = \bff_{\mathrm{c},t} s_{\mathrm{c},t} + \sum_{k=1}^{K_t} \bff_{k,t} s_{k,t},
	\end{align}
	where $\bff_{\mathrm{c},t} \in \mathbb{C}^{N \times 1}$ and $\bff_{k,t} \in \mathbb{C}^{N \times 1}$ are the linear precoders for the common stream and $k$-th private stream, respectively. The $k$-th mUE will then receive the signal given as
	\begin{align}
		y_{k,t} &= \bh_{k,t}^{\mathrm{H}} \bx_{t}+z_{k,t} \notag \\
		&=\bh_{k,t}^{\mathrm{H}} \bF_t \bs_t + z_{k,t},
	\end{align}
	where $\bh_{k,t} \in \mathbb{C}^{N \times 1}$ is the AP-($k$-th mUE) link, $\bF_t = \left[ \bff_{\mathrm{c},t}, \bff_{1,t},...,\bff_{K_t,t}\right]$, $\bs_t = [s_{\mathrm{c}}, s_1,...,s_{K_t}]$, and $z_{k,t} \sim \cC \cN (0,1)$ is the additive white Gaussian noise (AWGN). 
	Without loss of generality, we assume the noise variance is one throughout this paper.
	 
	Similar to current RSMA studies \cite{RSMA1layer1,RSMA1layer2}, the mUEs first decode the common stream.
	The achievable rate of the common stream for the $k$-th mUE is given as
	\begin{align}
	 	R_{\mathrm{c},k,t} = \log_2 \left( 1 +  \frac{|\bh_{k,t}^{\mathrm{H}} \bff_{\mathrm{c},t}|^2}{ \sum_{i=1}^{K_t}|\bh_{k,t}^{\mathrm{H}} \bff_{i,t}|^2 + 1}\right).
	\end{align}
	To guarantee successful decoding for all the mUEs, the common stream must satisfy the constraint given as
	\begin{align}
	 	\sum_{i=1}^{K_t} \{\alpha_{\mathrm{c},i,t} + \beta_{\mathrm{c},i,t} \} \leq R_{\mathrm{c},k,t}, \forall k \in \cK_t,
	\end{align}
	where $\alpha_{\mathrm{c},k,t}$ and $\beta_{\mathrm{c},k,t}$ are the rates of the common parts $W_{\mathrm{c},k,t}$ and $W_{\mathrm{d,c},k,t}$, respectively.
	The messages can be decoded equivalently to the encoding case in reverse order.
	After decoding the common stream, the $k$-th mUE will perform SIC to remove the effect of the common stream  from the received signal.
	The achievable rate of the private stream for the $k$-th mUE is then given as
	\begin{align}
	 R_{\mathrm{p},k,t} = \log_2 \left( 1 +  \frac{|\bh_{k,t}^{\mathrm{H}} \bff_{k,t}|^2}{ \sum_{i\ne k}^{K_t}|\bh_{k,t}^{\mathrm{H}} \bff_{i,t}|^2 + 1}\right),
	\end{align}
	where we observe that the interference induced from the common stream is gone.
	Since the $k$-th private stream is intended only for the $k$-th mUE, the private stream only has to satisfy the constraint given as
	\begin{align}
	 	\alpha_{\mathrm{p},k,t} + \beta_{\mathrm{p},k,t} \leq R_{\mathrm{p},k,t}, 
	\end{align}
	where similar to the common parts, $\alpha_{\mathrm{p},k,t}$ and $\beta_{\mathrm{p},k,t}$ correspond to the rates of the private parts $W_{\mathrm{p},k,t}$ and $W_{\mathrm{d,p},k,t}$, respectively.
	Finally, the rate of the $k$-th mUE can be expressed as $R_{k,t} = \alpha_{\mathrm{c},k,t}+\alpha_{\mathrm{p},k,t}$, and the rate of the dUE message for the $k$-th mUE can be expressed as $R_{\mathrm{d},k,t}^{(1)} = \beta_{\mathrm{c},k,t}+\beta_{\mathrm{p},k,t}$. 
	
	In the second phase, the mUEs extract and relay the messages for the dUE.
	For the $k$-th mUE, the message $W_{\mathrm{d},k,t}$ is obtained through a message combiner.
	Then, the mUE encodes the message $W_{\mathrm{d},k,t}$ into a stream $s_{\mathrm{d},k,t}$ and transmits the signal given as
	\begin{align}
		x_{\mathrm{d},k,t} = f_{\mathrm{d},k,t} s_{\mathrm{d},k,t},
	\end{align}
	where $f_{\mathrm{d},k,t} \in \mathbb{C}$ is the linear precoder for the relaying stream.
	The dUE received signal is then given as
	\begin{align}
		y_{\mathrm{d},t} &= \sum_{k=1}^{K_t} g_{k,t} x_{\mathrm{d},k,t} + z_{\mathrm{d},t} \notag \\
		&= \sum_{k=1}^{K_t} g_{k,t} f_{\mathrm{d},k,t} s_{\mathrm{d},k,t} + z_{\mathrm{d},t},
	\end{align}
	where $g_{k,t}$ is the ($k$-th mUE)-dUE link and $z_{\mathrm{d},t} \sim \cC \cN (0,1)$ is the AWGN.
	We assume that all the streams arrive simultaneously, which is a lower bound for the achievable rate since it will maximize the interference.
	
	The achievable rate for the message from the $k$-th mUE is expressed as
	\begin{align}
		R_{\mathrm{d},k,t}^{(2)} = \log_2 \left( 1+ \frac{|g_{k,t} f_{\mathrm{d},k,t}|^2}{\sum_{i \ne k}^{K_t} |g_{i,t}f_{\mathrm{d},i,t}|^2 +1}\right).
	\end{align}
	Considering the fact that the message $W_{\mathrm{d},k,t}$ should be successfully transmitted through the AP-($k$-th mUE) link and the ($k$-th mUE)-dUE link, the overall rate for the dUE can be expressed as
	\begin{align}
		R_{\mathrm{d},t} = \sum_{k=1}^{K_t} \min \left\{ R_{\mathrm{d},k,t}^{(1)}, R_{\mathrm{d},k,t} ^{(2)} \right\}, \label{eq: dUE rate}
	\end{align}
	where the minimum function is to guarantee successful transmissions in both phases, and \eqref{eq: dUE rate} is the sum of the $K_t$ streams.
	Using the derived rates, the throughput requirements of the $k$-th mUE and dUE can be expressed as
	\begin{align}
		\bar{D}_k &\leq \sum_{t=1}^T \tau B R_{k,t}, \\
		\bar{D}_\mathrm{d} &\leq \sum_{t=1}^T \tau B R_{\mathrm{d},t},
	\end{align}
	respectively, where $\tau$ is the duration of a single time block and $B$ is the bandwidth.
	Considering the power consumption of the AP and the mUEs, the energy consumption at the $t$-th time block is defined as 
	\begin{align}
		P \left(\bF_t, \bff_{\mathrm{d},t} \right) = \tau  \left( \trace \left( \bF_t^{\mathrm{H}} \bF_t \right) + \sum_{k=1}^{K} \trace \left(f_{\mathrm{d},k,t}^{\mathrm{H}} f_{\mathrm{d},k,t}\right)\right).
	\end{align}
	
	To summarize the overall cooperative framework, the AP first transmits $K_t$ private streams and a single common stream that contain the messages for the mUEs and dUE.
	Second, each mUE decodes the common stream and its dedicated private stream, obtaining its message and the dUE message that will be relayed through that mUE.
	Third, each mUE relays the dUE message that it has to the dUE.
	Finally, the dUE obtains its message by decoding the $K_t$ data streams.
	While our main motivation is to serve the dUE via cooperative communication to overcome the blockage and propagation loss of high frequency systems, the proposed iDeCRS framework can still be used in general scenarios where UEs can cooperate to serve a UE in outage.

	\subsection{Channel Model}
	The AP-($k$-th mUE) link at the $t$-th time block is defined as
	\begin{align}
	\bh_{k,t} =
		\begin{cases}
			\sqrt{\chi_0 (d_{k,t}^{\mathrm{AP-mUE}})^{-\eta}} \ba(\phi_{k,t}, \psi_{k,t}), \  &\text{prob.} \ 1-p,\\
			\boldsymbol{0}_N, &\text{prob.} \ p,
		\end{cases}
	\end{align}
	where the link is blocked with the blockage probability $p$.
	The variable $\chi_0$ is the path-loss at the unit distance following the Friis transmission equation, $d_{k,t}^{\mathrm{AP-mUE}}$ is the AP-($k$-th mUE) distance at the $t$-th time block, and $\eta$ is the path-loss exponent.
	The AP is assumed to have a uniform planar array (UPA) structure, and the array response is expressed as
	\begin{align}
		\ba(\phi_{k,t}, &\psi_{k,t}) = [1, \cdots ,\exp \left(j (N_\mathrm{v} - 1 ) \pi \sin \phi_{k,t} \right)]^{\mathrm{T}} \notag \\
		&\otimes [1, \cdots , \exp(j (N_\mathrm{h} -1) \pi \cos \phi_{k,t} \cos \psi_{k,t})]^{\mathrm{T}},
	\end{align}
	where $\phi_{k,t}$ and $\psi_{k,t}$ are the vertical and horizontal angles between the AP and the $k$-th mUE, and the number of vertical and horizontal antennas are given as $N_\mathrm{v}$ and $N_\mathrm{h}$, respectively.
	As the channels are highly directive with minimal scattering due to the high frequency signals, we only consider the LoS links, where the small scale fading effects will be marginal.
	
	Similarly, the ($k$-th mUE)-dUE link at the $t$-th time block is given as
	\begin{align}
		g_{k,t} =
		\begin{cases}
			\sqrt{\chi_0 (d_{k,t}^{\mathrm{mUE-dUE}})^{-\eta}} \exp(j \theta_{k,t}), \  &\text{prob.} \ 1-p,\\
			0, &\text{prob.} \ p,
		\end{cases}
	\end{align}
	where $d_{k,t}^{\mathrm{mUE-dUE}}$ is the ($k$-th mUE)-dUE distance at the $t$-th time block, and $\theta_{k,t}$ is the phase value of the channel.

	\section{Genie CSI Case} \label{sec:GENIE}
	
	 	\begin{figure*}[t] 
		\begin{align}
		\mathrm{(P1.1):} &\min_{\cV'_{\mathrm{GENIE}}} \sum_{t=1}^{T} P \left(\bF_t, \bff_{\mathrm{d},t} \right) \notag \\
		\mathrm{s.t.} \ &\eqref{eq:1-a}\text{-}\eqref{eq:1-c}, \text{(1-e)}, \text{(1-f)}, \notag \\
		&R_{\mathrm{d},t} = \sum_k^{K_t} \mu_{k,t}, \ \mu_{k,t} \leq \beta_{\mathrm{c},k,t} + \beta_{\mathrm{p},k,t}, \   \mu_{k,t} \leq R_{\mathrm{d},k,t}^{(2)}, \tag{1.1-a), (1.1-b), (1.1-c} \label{eq:1.1-a} \\
		&2^{R_{\mathrm{c},k,t}} - 1 \leq \frac{|\bh_{k,t}^{\mathrm{H}}\bff_{\mathrm{c},t}|^2}{\gamma_{\mathrm{c},k,t}}, \ \sum_{i=1}^{K_t}|\bh_{k,t}^{\mathrm{H}} \bff_{i,t}|^2 + 1 \leq \gamma_{\mathrm{c},k,t}, \tag{1.1-d), (1.1-e} \label{eq:1.1-d}\\ 
		&2^{R_{\mathrm{p},k,t}} - 1 \leq \frac{|\bh_{k,t}^{\mathrm{H}}\bff_{k,t}|^2}{\gamma_{\mathrm{p},k,t}}, \ \sum_{i\ne k}^{K_t}|\bh_{k,t}^{\mathrm{H}} \bff_{i,t}|^2 + 1 \leq \gamma_{\mathrm{p},k,t}, \tag{1.1-f), (1.1-g} \label{eq:1.1-f}\\ 
		&2^{R_{\mathrm{d},k,t}^{(2)}} - 1 \leq \frac{|g_{k,t} f_{\mathrm{d},k,t}|^2}{\gamma_{\mathrm{d},k,t}}, \ \sum_{i \ne k}^{K_t} |g_{i,t} f_{\mathrm{d},i,t}|^2 +1 \leq \gamma_{\mathrm{d},k,t}, \tag{1.1-h), (1.1-i} \label{eq:1.1-h}
		\end{align}
		\hrule
	\end{figure*}

	In this section, we formulate and solve the energy consumption minimization problem while assuming the genie CSI case, i.e., the AP and mUEs know the channels of all $T$ time blocks, named GENIE.
	While GENIE is impractical since it assumes the knowledge of the future channels, the main contribution of GENIE is that it provides insight on the potential of transmission over multiple time blocks and acts as a energy consumption lower bound for other techniques.
	The overall problem is formulated as
	\begin{align}
		\mathrm{(P1):} \min_{\cV_{\mathrm{GENIE}}} &\sum_{t=1}^{T} P \left(\bF_t, \bff_{\mathrm{d},t} \right) \notag \\
		\mathrm{s.t.} \ &R_{\mathrm{c},k,t} \geq \sum_{i=1}^{K_t} \left\{\alpha_{\mathrm{c},i,t} + \beta_{\mathrm{c},i,t}\right\}, \tag{1-a} \label{eq:1-a}\\
		&R_{\mathrm{p},k,t} \geq \alpha_{\mathrm{p},k,t} + \beta_{\mathrm{p},k,t}, \tag{1-b} \label{eq:1-b}\\
		&R_{k,t} = \alpha_{\mathrm{c},k,t} + \alpha_{\mathrm{p},k,t}, \tag{1-c} \label{eq:1-c}\\
		&R_{\mathrm{d},t} = \sum_{k=1}^{K_t} \min \{R_{\mathrm{d},k,t}^{(1)}, R_{\mathrm{d},k,t}^{(2)}\},\tag{1-d} \label{eq:1-d}\\
		&D_k \leq \sum_{t=1}^{T} R_{k,t}, \ D_\mathrm{d} \leq \sum_{t=1}^{T} R_{\mathrm{d},t}, \tag{1-e), (1-f} \label{eq:1-e}
	\end{align}
	where we neglect the terms $\forall k \in \cK_t$ and $\forall t \in T$ for simplicity.
	The variables $D_k$ and $D_\mathrm{d}$ are the normalized throughput constraints $D_k = \bar{D}_k/\tau B$ and $D_\mathrm{d} = \bar{D}_\mathrm{d} / \tau B$, respectively. 
	The variable set $\cV_{\mathrm{GENIE}}$ is given as $\cV_{\mathrm{GENIE}} = \{ \bF, \bff_\mathrm{d}, \boldsymbol{\alpha}, \boldsymbol{\beta} \}$ with $\bF = [\bF_1, ..., \bF_T]$, $\bff_\mathrm{d} = [f_{\mathrm{d},1,1}, ..., f_{\mathrm{d},K_T,T}], \boldsymbol{\alpha} = [\alpha_{\mathrm{c},1,1},...,\alpha_{\mathrm{p},K_T,T}],$ and $\boldsymbol{\beta} = [\beta_{\mathrm{c},1,1},...,\beta_{\mathrm{p},K_T,T}]$. 
	The constraints \eqref{eq:1-a} and \eqref{eq:1-b} represent the achievable rates of the common and private streams, respectively. The constraints \eqref{eq:1-c} and \eqref{eq:1-d} express the rates of the mUEs and dUE, respectively. 
	Finally, the constraints (1-e) and (1-f) represent the throughput constraints.
	Owing to the genie CSI, smart resource allocation is possible, e.g., transmit more data when the overall channel gain is larger to decrease the energy consumption of the system.
	Due to the non-convex form of the achievable rates, (P1) cannot be solved directly.
	Instead, we transform (P1) into a tractable form and adopt the successive convex approximation (SCA) approach \cite{SCA}.

	We first transform (P1) into an equivalent problem including slack variables as (P1.1) given at the top of the next page.
	The variable set $\cV'_{\mathrm{GENIE}} = \{\cV_{\mathrm{GENIE}}, \bR,  \boldsymbol{\mu}, \boldsymbol{\gamma} \}$ includes the slack variables $\bR = [R_{\mathrm{c},1,1},...,R_{\mathrm{d},K_T,T}^{(2)}], \boldsymbol{\mu} = [\mu_{1,1},...,\mu_{K_T,T}]$ and $\boldsymbol{\gamma}=[\gamma_{\mathrm{c},1,1},...,\gamma_{\mathrm{d},K_T,T}]$.
	The constraints (1.1-a)-(1.1-c) express the minimum function from the constraint \eqref{eq:1-d}.
	The constraints (1.1-d)-(1.1-i) represent the achievable rates of $R_{\mathrm{c},k,t}, R_{\mathrm{p},k,t}$ and $R_{\mathrm{d},k,t}^{(2)}$, respectively.
	
	While the slack variables do not affect the optimal solution, (P1.1) is still not a standard convex optimization problem.
	Albeit, due to the convex functions on the right-hand-side (RHS) of the constraints (1.1-d), (1.1-f), and (1.1-h), we are motivated to use the SCA approach.
	The resulting surrogate problem is given as
	\begin{align}
		\mathrm{(P1.2):} \min_{\cV'_{\mathrm{GENIE}}} &\sum_{t=1}^{T} P \left(\bF_t, \bff_{\mathrm{d},t} \right) \notag \\
		\mathrm{s.t.} \ &\eqref{eq:1-a}\text{-}\eqref{eq:1-c}, \text{(1-e)}, \text{(1-f)}, \text{(1.1-a)-(1.1-c)}, \notag \\ &\text{(1.1-e)}, \text{(1.1-g)}, \text{(1.1-i)}, \notag \\
		&2^{R_{\mathrm{c},k,t}} - 1 \leq \tilde{g}_{k,t}^{(\ell)} \left( \bff_{\mathrm{c},t}, \gamma_{\mathrm{c},k,t}\right), \tag{1.2-a} \label{eq:1.2-a}\\ 
		&2^{R_{\mathrm{p},k,t}} - 1 \leq \tilde{g}_{k,t}^{(\ell)} \left( \bff_{k,t}, \gamma_{\mathrm{p},k,t}\right), \tag{1.2-b} \label{eq:1.2-b}\\ 
		&2^{R_{\mathrm{d},k,t}^{(2)}} - 1 \leq \tilde{g}_{\mathrm{d},k,t}^{(\ell)} \left( f_{\mathrm{d},k,t}, \gamma_{\mathrm{d},k,t} \right),  \tag{1.2-c} \label{eq:1.2-c}
	\end{align}
	where the first-order Taylor approximation function $\tilde{g}_{k,t}^{(\ell)}$ is defined as
	\begin{align}
		\tilde{g}_{k,t}^{(\ell)} \left( \bff, \gamma \right) = \frac{2\Re\{ \bff^{(\ell)\mathrm{H}} \bh_{k,t} \bh_{k,t}^{\mathrm{H}} \bff \}}{\gamma^{(\ell)}} - \frac{|\bh_{k,t}^{\mathrm{H} } \bff^{(\ell)}|^2}{\left(\gamma^{(\ell)}\right)^2} \gamma,
	\end{align}
	with $\bff^{(\ell)}$ and $\gamma^{(\ell)}$ as the local points at the $\ell$-th iteration.
	The derivation of $\tilde{g}_{k,t}^{(\ell)}$ is provided in Appendix A.
	The function $\tilde{g}_{\mathrm{d},k,t}^{(\ell)}$ is defined similarly, where we neglect the definition due to redundancy. 
	We observe that (P1.2) is now a standard convex optimization problem and can be solved through convex optimization tools such as CVX \cite{CVX1}. 
	By iteratively solving (P1.2) and updating the $\ell$-th local points as the solutions of the $(\ell-1)$-th iteration, the SCA approach guarantees that the solution converges to a local optimum point of the original problem (P1) \cite{SCA}.
	The overall algorithm of GENIE is summarized in Algorithm 1.
	
	\textit{Remark 2:} Since direct communication is impossible between the AP and $k$-th mUE when the AP-($k$-th mUE) link is blocked, we considered communication with $K_t$ mUEs out of $K$ mUEs at the $t$-th time block, where $K_t$ is the number of mUEs that have the AP-mUE links.
	However, we did not explicitly consider the mUE-dUE links, where the $K_t$ mUEs with AP-mUE links may have blocked mUE-dUE links.
	This is because the solution from (P1) innately considers this factor.
	If the ($k$-th mUE)-dUE link at the $t$-th time block is blocked, i.e., $g_{k,t} = 0$, no information for the dUE will be allocated to the $k$-th mUE, i.e., $R_{\mathrm{d},k,t}^{(1)} = 0$. 
	One may argue that similar to the mUE-dUE links, the blockage in the first phase can be analogously considered, i.e., even if we consider communication with all $K$ mUEs at the $t$-th time block, the optimized solution will result in communication with the $K_t$ mUEs by its own.
	However, this is not possible due to the common stream.
	If the problem assumes communication with the mUEs with blocked AP-mUE links, the achievable rate of the common stream is forced to zero, losing all the benefits of rate-splitting.
	
	\textit{Remark 3:} While the considered system assumes a single dUE, it is quite straightforward to consider multiple dUEs.
	By assuming that the time blocks are orthogonal resource blocks and that each dUE in a resource block is a different dUE, (P1) can cover the multiple dUE scenario with a simple modification in the throughput constraints.

	\begin{algorithm}[t]
		\begin{algorithmic} [1]
			\caption{Pseudo code for GENIE}
			\State \textbf{Initialization:} Set $\ell = 0, \bF^{(\ell)}, \bff_\mathrm{d}^{(\ell)},$ and $\bgamma^{(\ell)}$.
			\Repeat
			\State Solve (P1.2) with local points $\bF^{(\ell)}, \bff_\mathrm{d}^{(\ell)},$ and $\bgamma^{(\ell)}$.
			\State Update $\bF^{(\ell+1)}, \bff_\mathrm{d}^{(\ell+1)},$ and $\bgamma^{(\ell+1)}$ as solutions of (P1.2).
			\State Set $\ell = \ell +1$.
			\Until \ Energy consumption decreases by a fraction below a predefined threshold.
		\end{algorithmic}
	\end{algorithm} 
	
	\section{Instantaneous CSI Case} \label{sec: INST}
	In this section, we consider the instantaneous CSI case, i.e., the AP and mUEs know only the current channels.
	We assume perfect CSI for the current channels, where the blockage probability can be obtained by counting the number of instances when the channel gain drops below a predefined threshold, and a practical channel estimation technique adequate to obtain the channel gain has been studied in our previous work \cite{Ours}.
	From hereafter, we propose two techniques that attempt to reduce the energy consumption without the use of future CSI.
	
	\subsection{Efficiency Constrained Optimization (ECO)}
	The essential task for the instantaneous CSI case is to adapt to the various scenarios that may occur at each time block.
	Several examples are given below.
	
	1. The overall channel gains may be small.
	
	2. There may be a small/large number of mUEs with AP-mUE links.
	
	3. The AP-mUE blockage may not be uniformly distributed across the time blocks.
	
	4. Only a small number of mUEs may have good channel qualities.
	
	To adapt to these scenarios, we propose ECO, an energy efficiency constrained optimization algorithm.
	Before we describe ECO, we make an assumption that the system has the knowledge of the blockage probability $p$ and the channel distribution.
	This assumption is easily plausible as the coherence time of high frequency systems is very short and thus, the average channel information is easy to obtain.
	Using this assumption, we can derive the energy efficiency as
	\begin{align}
		\Delta = \frac{D_\mathrm{d} + \sum_{k=1}^{K} D_k}{\sum_{t=1}^{T}P \left(\bF_t, \bff_{\mathrm{d},t} \right)}, \label{eq:Delta}
	\end{align}
	where the numerator and denominator are the total amount of transmitted normalized data and the total amount of energy consumed throughout the $T$ time blocks, respectively.
	The channels and beamformers from \eqref{eq:Delta} are simulated via GENIE.
	Note that, while we use the average energy efficiency from GENIE in \eqref{eq:Delta}, $\Delta$ can be set with the information the AP possesses, e.g., $\Delta$ can be set as the efficiency from the previous transmission.
	
	For ECO, we do not directly minimize the energy consumption. 
	Instead, we attempt to minimize the energy consumption indirectly by constraining the \textit{efficiency} of the system and solving a weighted sum rate maximization problem. 
	By achieving energy efficient communication, this will naturally lead to low energy consumption.
	Using $\Delta$, we can formulate the optimization problem at the $t$-th time block as
	\begin{align}
		\mathrm{(P2):} &\max_{\cV_{\mathrm{ECO}}} \ w_{\mathrm{d},t} R_{\mathrm{d},t} + \sum_{k=1}^{K_t} w_{k,t} R_{k,t} \notag  \\
		\mathrm{s.t.} \ &\eqref{eq:1-a}\text{-}\eqref{eq:1-d}, \notag \\
		&R_{k,t} \leq \delta_{k,t}, \tag{2-a} \label{eq:2-a}\\
		&R_{\mathrm{d},t} \leq \delta_{\mathrm{d},t}, \tag{2-b} \label{eq:2-b}\\
		&\frac{R_{\mathrm{d},t} + \sum_{k=1}^{K_t} R_{k,t}} {P \left(\bF_t, \bff_{\mathrm{d},t} \right)} \geq s \Delta, \tag{2-c} \label{eq:2-c}
	\end{align}
	where $w_{k,t}$ is the weight for the $k$-th mUE given as $w_{k,t} = \delta_{k,t} /  D_k$, and $\delta_{k,t}$ is the residual data at the $t$-th time block as  $\delta_{k,t} = D_k - \sum_{i=1}^{t-1} R_{i,t}$. 
	The weight $w_{\mathrm{d},t}$ is similarly defined.
	Due to the weights in the objective, ECO motivates more data transmissions to the UEs that have more residual data to prevent the AP from transmitting to only the UEs with high channel gains.
	The variable set $\cV_{\mathrm{ECO}}$ is the same as $\cV_{\mathrm{GENIE}}$ but only with the variables at the $t$-th time block.
	The constraints \eqref{eq:2-a} and \eqref{eq:2-b} are the throughput constraints so that the AP does not serve the UEs more than their throughput requirements.
	The constraint \eqref{eq:2-c} is the efficiency constraint, where $s$ is a hyper parameter to relax this constraint.
	As a result, ECO encourages communication until the efficiency becomes $s \Delta$.
	The problem (P2) only considers the $t$-th time block due to the lack of future CSI, whereas (P1) considered all the time blocks simultaneously.
	
	Same as (P1), we can transform (P2) into an equivalent problem as
	\begin{align}
		\mathrm{(P2.1):} \max_{\cV'_{\mathrm{ECO}}} \ &w_{\mathrm{d},t} R_{\mathrm{d},t} + \sum_{k=1}^{K_t} w_{k,t} R_{k,t} \notag  \\
		\mathrm{s.t.} \ &\eqref{eq:1-a} \text{-} \eqref{eq:1-c}, \text{(1.1-a)}\text{-}\text{(1.1-i)}, \eqref{eq:2-a}, \eqref{eq:2-b}, \notag \\
		&R_{\mathrm{d},t} + \sum_{k=1}^{K_t} R_{k,t} \geq s \Delta P\left( \bF_t, \bff_{\mathrm{d},t} \right), \tag{2.1-a} \label{eq:2.1-a}
	\end{align}
	where $\cV'_{\mathrm{ECO}}$ is the same as  $\cV'_{\mathrm{GENIE}}$ but only with the variables at the $t$-th time block.
	The constraint \eqref{eq:2.1-a} is equivalent to \eqref{eq:2-c}, where the denominator is shifted to the RHS.
	Fortunately, the constraints \eqref{eq:2-a}, \eqref{eq:2-b}, and \eqref{eq:2.1-a} are all in standard convex optimization forms.
	Therefore, we solve (P2.1) with the SCA approach similar to (P1.1), which will give a local optimal solution for the problem (P2).
	By solving (P2) for each time block, we obtain a solution for the overall $T$ time blocks.
	Note that, (P2) does not guarantee the throughput requirements of the UEs, e.g., it is possible that there is residual data and all the AP-mUE links are blocked at the last time block.
	This concern will be considered in the next subsection.
	Through the unique approach of energy efficiency constrained communication, ECO can perform smart resource allocation and tackle the various scenarios mentioned before, e.g., transmit more when the channel conditions are good.
	
	The hyper parameter $s$ is a variable to loosen or tighten the efficiency constraint.
	With high $s$ values, the efficiency constraint will be tight.
	However, consider a pessimistic case when many time blocks have detrimental channel conditions.
	Due to the channel conditions, it will be impossible to satisfy the tight efficiency constraint and the throughput requirements simultaneously. 
	Hence, while (P2) may provide an energy efficient solution, it may not suffice the throughput requirements.
	On the contrary, if many time blocks experience good channel conditions, high $s$ values will achieve both the throughput requirements and low energy consumption.
	
	The interpretations for low $s$ values are the opposite to high $s$ values. 
	Low $s$ values will adapt well to detrimental channel conditions.
	However, due to the loose efficiency constraint, the AP will transmit most of the requested data in the early time blocks using more power. 
	Thus, the solution will have low energy efficiency and high energy consumption.
	 
	Exploiting the fact that the throughput requirements will be sufficed in the early time blocks for low $s$ values, a performance bound for low $s$ values can be derived.
	First, we define the throughput transmitted at the $t$-th time block as $\tilde{R}_t = R_{\mathrm{d},t} + \sum_{k=1}^{K} R_{k,t}$.
	Due to the throughput requirements, the equality condition $\sum_{k=1}^{K} D_k + D_{\mathrm{d}} = \sum_{t=1}^{T} \tilde{R}_t$ must be sufficed.
	Then, based on the efficiency constraint \eqref{eq:2.1-a}, the overall energy consumption before the last time block can be expressed as
	\begin{align}
		\frac{\sum_{k=1}^{K} D_k + D_{\mathrm{d}} - \tilde{R}_T}{s\Delta} \geq \sum_{t=1}^{T-1} P\left( \bF_t, \bff_{\mathrm{d},t} \right).
	\end{align}
	As stated before, the throughput requirements will be sufficed before the last time block when $s$ is low, which implies $\tilde{R}_T = 0$ and $P\left( \bF_T, \bff_{\mathrm{d},T} \right) = 0$. 
	Therefore, the bound for the energy consumption is derived as
	\begin{align}
		\frac{\sum_{k=1}^{K} D_k + D_{\mathrm{d}}}{s\Delta} \geq \sum_{t=1}^{T} P \left(\bF_t, \bff_{\mathrm{d},t} \right). \label{eq: ECO bound}
	\end{align}
	We will show that \eqref{eq: ECO bound} indeed works as an energy consumption upper bound and explain possible usage cases of this bound in Section \ref{sec: simul}.
	Note that, the bound in \eqref{eq: ECO bound} may not hold for high $s$ values as the throughput requirements may not be sufficed.
	
	\textit{Remark 4:} While ECO provides satisfactory performance, which is corroborated in Section VI, there is still room for improvement by further exploiting the information of the system.
	For instance, we can set $s$ as a variable to the channel condition for each time block, or modify the efficiency constraint to utilize the information of the channel distribution. 
	These are, however, nontrivial extensions of our current work, and we leave them as future research topics.
	
	\subsection{Even Data Transmission (EDT)}
	In this subsection, we propose EDT, which is an energy consumption minimization algorithm for each time block. 
	The main goal of EDT is to achieve \textit{even} data transmission with respect to time, i.e., split the data for each mUE throughout the time blocks.
	EDT is a naive but persuasive approach, considering the fact that dividing the data evenly over the time blocks is likely to induce efficient communication.
	As a toy example, we consider an energy consumption minimization problem in a single UE AWGN channel.
	
	\begin{proposition}
		For a single UE AWGN channel with the noise distribution as $\cC \cN (0,1)$, even data transmission is the optimal energy consumption minimization strategy.
	\end{proposition}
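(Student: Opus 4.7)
The plan is to reduce the claim to a one-line convexity argument. First, I would make the setting precise: over $T$ time blocks with a unit-variance AWGN channel and a single UE having throughput requirement $D$, the energy consumption at the $t$-th block is $P_t = 2^{R_t}-1$, since the Shannon capacity gives $R_t = \log_2(1+P_t)$ with $P_t \geq 0$. The total energy (absorbing the common factor $\tau$) is then $E(R_1,\ldots,R_T) = \sum_{t=1}^{T}(2^{R_t}-1)$, and the problem is
\begin{equation*}
\min_{R_t \geq 0} \ \sum_{t=1}^{T}\bigl(2^{R_t}-1\bigr) \quad \text{s.t.} \quad \sum_{t=1}^{T} R_t \geq D.
\end{equation*}

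Second, I would observe that the constraint must be active at the optimum: if $\sum_t R_t > D$, decreasing any positive $R_t$ strictly decreases $2^{R_t}-1$, contradicting optimality. Hence we may replace the inequality by $\sum_t R_t = D$. Third, since $x \mapsto 2^x$ is strictly convex, Jensen's inequality yields
\begin{equation*}
\frac{1}{T}\sum_{t=1}^{T} 2^{R_t} \;\geq\; 2^{\frac{1}{T}\sum_{t=1}^{T}R_t} \;=\; 2^{D/T},
\end{equation*}
with equality if and only if $R_1=\cdots=R_T$. Multiplying by $T$ and subtracting $T$ gives
\begin{equation*}
\sum_{t=1}^{T}\bigl(2^{R_t}-1\bigr) \;\geq\; T\bigl(2^{D/T}-1\bigr),
\end{equation*}
and the unique minimizer under the equality constraint is $R_t = D/T$ for all $t$, which is precisely the even data transmission allocation. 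As an alternative to Jensen, the same conclusion follows by forming the Lagrangian $\cL = \sum_t(2^{R_t}-1) - \lambda(\sum_t R_t - D)$ and noting that $\partial \cL/\partial R_t = (\ln 2)\,2^{R_t} - \lambda = 0$ forces all $R_t$ to share the common value $\log_2(\lambda/\ln 2)$, which by the equality constraint equals $D/T$.

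There is no real technical obstacle here; the result is a direct consequence of the strict convexity of $P \mapsto 2^P-1$ combined with a linear equality constraint on the rates. The only point that deserves a sentence of justification is the reduction from the inequality $\sum_t R_t \geq D$ to the equality $\sum_t R_t = D$, together with the observation that non-negativity of $R_t$ is automatically satisfied by the uniform allocation $D/T$, so the solution is interior to the feasible region and the Jensen/KKT argument applies without boundary complications.
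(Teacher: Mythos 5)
Your proof is correct and rests on the same idea as the paper's: the strict convexity of $R \mapsto 2^{R}-1$. The paper phrases it as a proof by contradiction, applying the two-point convexity inequality to a pair of unequal powers, while you apply Jensen's inequality to all $T$ terms at once and additionally justify the activeness of the throughput constraint and the uniqueness of the uniform allocation --- a slightly more complete write-up of the same argument.
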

	\begin{proof}
		The objective is to minimize $\sum_{t=1}^{T} P_t$ with a throughput constraint $D \leq \sum_{t=1}^{T} R_t, \ R_t = \log_2 \left(1+ P_t\right)$. 
		We proceed with the proof through contradiction. 
		Without loss of generality, suppose there exists an optimal solution where $P_1 < P_2$.
		The power $P_1$ can be expressed as $P_1 = f(R_1), f(x) = 2^x - 1$, where $f(x)$ is known to be a convex function.
		By the property of the convex function, $f(x)$ satisfies $f(tx + (1-t)y) \leq t f(x) + (1-t) f(y)$, where $0 \leq t \leq 1$.
		Using this property, we can get $2f(\frac{R_1+R_2}{2}) \leq f(R_1)+f(R_2) = P_1 + P_2$ by setting $x = R_1, y = R_2,$ and $t = 0.5$. 
		Taking $f(\frac{R_1+R_2}{2}) = P^\star$, the energy is decreased from the optimal solution, leading to contradiction.
		Since this holds for all $P_t$, equal power, i.e., even data transmission, is the optimal strategy to minimize the energy consumption.
	\end{proof}

	Motivated by this simple example, we consider EDT.
	The optimization problem of EDT at the $t$-th time block is formulated as
	\begin{align}
		\mathrm{(P3):} \min_{\cV_{\mathrm{EDT}}} &  P \left(\bF_t, \bff_{\mathrm{d},t} \right) \\
		\mathrm{s.t.} \ &\eqref{eq:1-a}\text{-}\eqref{eq:1-d}, \notag \\
		&\frac{\delta_{k,t}}{T-t+1} \leq  R_{k,t}, \tag{3-a} \label{eq:3-a}\\
		&\frac{\delta_{\mathrm{d},t}}{T-t+1} \leq  R_{\mathrm{d},t}, \tag{3-b} \label{eq:3-b}
	\end{align}
	where $\cV_{\mathrm{EDT}} = \cV_{\mathrm{ECO}}$.
	Similar to (P2), (P3) performs optimization for a single time block.
	Since (P3) cannot consider the overall system due to the lack of future CSI, we include alternate throughput constraints as \eqref{eq:3-a} and \eqref{eq:3-b}. 
	For \eqref{eq:3-a}, the rate must be at least over $\delta_{k,t}/\left(T-t+1\right)$, which is the residual data divided by the remaining number of time blocks.
	This constraint is quite intuitive. 
	If the AP-($k$-th mUE) link exists for all time blocks, the constraint induces even data transmission for all time blocks.
	For scenarios when there is blockage, the constraint will encourage more data transmission when large residual data remains. 
	Hence, the constraint induces even data transmission.
	The constraint \eqref{eq:3-b} is analogous to \eqref{eq:3-a}.
	(P3) is equivalent to (P1) for a single time block except the constraints \eqref{eq:3-a} and \eqref{eq:3-b}.
	Thus, (P3) is solved through the SCA approach similar to (P1).

	Due to the random blockage of the channels, both ECO and EDT have a non-zero probability where the throughput requirements cannot be satisfied.
	This is inevitable for the instantaneous CSI assumption and in reality, as the randomness cannot be predicted.
	To compensate for this factor, the AP may try to transmit all the residual data from the $\left(T-\mu\right)$-th time block, i.e., solve EDT with a throughput constraint $\delta_{k,t}$ instead of $\delta_{k,t}/\left(T-\left(T-\mu\right) + 1\right)$, where $\mu$ is a small constant value.
	This approach allows some buffer to mitigate the probability of failure.
	
	\textit{Remark 5:} ECO and EDT have different advantages.
	While ECO adapts to beneficial channel conditions, it does not communicate well if many detrimental channel conditions occur, being problematic to the throughput requirements.
	EDT does not consider the channel conditions while transmitting. 
	Thus, contrast to ECO, it will satisfy the throughput requirements even when the channel conditions are detrimental.
	However, EDT does not have the ability to transmit more data when the channel conditions are beneficial and may suffer from smaller energy efficiency than ECO.


	\section{Initialization and Analyses} \label{sec: analysis}
	\subsection{Initialization}
	To enable the SCA approach, we need to derive an appropriate initial solution inside the feasible set.
	We first delineate the initialization process for GENIE. 
	To satisfy the throughput requirements, we split the data for the $k$-th mUE as $D_k / T_k$, where $T_k$ is the number of time blocks when the AP-($k$-th mUE) link exists.
	This is possible because GENIE has the knowledge of the future channels.
	Similar to the mUEs, we split $D_\mathrm{d}$ into $D_\mathrm{d}/T_{\mathrm{d}}$, where $T_{\mathrm{d}}$ is the number of time blocks with at least one AP-mUE-dUE link.
	For the $t$-th time block, the $K_t$ mUEs and dUE must satisfy the throughput requirements $\{D_k / T_k\}_{k \in \cK_t}$ and $D_\mathrm{d}/T_{\mathrm{d}}$, respectively.
	
	Furthermore, we define the message split according to the iDeCRS framework.
	By splitting the dUE message with the number of mUEs with AP-mUE-dUE links, defined as $K'_t$, the $k$-th mUE at the $t$-th time block receives the rate of $D_k / T_k + D_\mathrm{d} / (T_\mathrm{d} K'_t)$ when the AP-mUE-dUE link exists or $D_k / T_k$ when the mUE-dUE link is blocked.
	With these rate requirements, we initialize the private streams by adopting the energy minimization beamformers assuming SDMA as \cite{Init1}
	\begin{align}
		\bff_{k,t} = \sqrt{p_{k,t}^{\mathrm{init}}} \frac{\left(\bI_N + \sum_{i=1}^{K_t} \lambda_{i,t} \bh_{i,t}\bh_{i,t}^{\mathrm{H}} \right)^{-1} \bh_{k,t}}{ \left\lVert   \left(\bI_N + \sum_{i=1}^{K_t} \lambda_{i,t} \bh_{i,t}\bh_{i,t}^{\mathrm{H}} \right)^{-1} \bh_{k,t} \right\rVert}, \label{eq: init}
	\end{align}
	where $p_{k,t}^{\mathrm{init}}$ and $\lambda_k$ are the private stream power and Lagrange multiplier of the $k$-th mUE at the $t$-th time block, respectively.
	The power values are derived as
	\begin{align}
		\begin{bmatrix}
		p_{1,t}\\
		\vdots \\
		p_{K,t} 
		\end{bmatrix}
		= \bM_t^{-1} \boldsymbol{1}_N,
	\end{align}	
	with the matrix $\bM_t$ defined as
	\begin{align}
		[\bM_t]_{ij} = 
		\begin{cases}
		\frac{1}{\xi_{i,t}} |\bh_{i,t}^{\mathrm{H}} \tilde{\bff}_{i,t}|^2, \quad &i = j, \\
		-|\bh_{i,t}^{\mathrm{H}} \tilde{\bff}_{j,t}|^2, \quad &i \ne j,
		\end{cases}
	\end{align}
	where $i, j \in \cK_t$, $\tilde{\bff}_{k,t}$ is the normalized beamformer from \eqref{eq: init}, and $\xi_{k,t}$ is the minimum signal-to-interference-and-noise ratio (SINR) for the $k$-th mUE at the $t$-th time block.
	The Lagrange multipliers are derived from the fixed-point equations given as
	\begin{align}
		\lambda_{k,t} =\frac{1}{\left( \left( 1 + \frac{1}{\xi_{k,t}} \right) \bh_{k,t}^{\mathrm{H}} \left(  \bI_N + \sum_{i=1}^{K_t} \lambda_{i,t} \bh_{i,t} \bh_{i,t}^{\mathrm{H}} \right)^{-1} \bh_{k,t} \right)}.
	\end{align}
	While the beamformers in \cite{Init1} are optimal considering SDMA, this is not the case for RSMA.
	Next, we define the common stream as \cite{RSMA2}
	\begin{align}
		\bff_{\mathrm{c},t} =  \sqrt{p_{\mathrm{c},t}^{\mathrm{init}}} \bu_{\mathrm{c},t},
	\end{align}
	where $\bu_{\mathrm{c},t}$ is the dominant left singular vector of the channel matrix $\bH_t = \left[ \bh_{1,t}, \cdots, \bh_{K_t,t} \right]$ at the $t$-th time block.
	The power $p_{\mathrm{c},t}^{\mathrm{init}}$ is set as the average power of the private streams.
	For the second phase, we initialize the transmit power of the mUEs as $p_{\mathrm{mUE}} = 0$ dBm.
	
	For EDT, we admit the same initialization with GENIE but with the throughput constraints as \eqref{eq:3-a} and \eqref{eq:3-b}.
	For ECO, to obtain a feasible point, the additional energy efficiency constraint \eqref{eq:2-c} must be satisfied, which is an unprecedented concept.
	By first deriving an initial point as in EDT, we divide the beamformers with a sufficiently large constant $\omega$, i.e., $\bF_t = \bF_t / \omega$ and $\bff_{\mathrm{d},t} = \bff_{\mathrm{d},t} / \omega $, so that the beamformers satisfy the energy efficiency constraint. 	
	
	\subsection{Convergence Analysis}
	We first consider GENIE for the convergence analysis.
	Following the notations of the SCA approach in \cite{SCA}, we define the functions $f(\bF, \bff_\mathrm{d},\bgamma)$ and $g(\bF,\bff_\mathrm{d}, \bgamma | \bF', \bff'_\mathrm{d}, \bgamma')$ as the objective functions of the problems (P1.1) and (P1.2), respectively, with the optimization variables $\bF$, $\bff_\mathrm{d}$, and $\bgamma$.
	For the function $g(\bF,\bff_\mathrm{d}, \bgamma | \bF', \bff'_\mathrm{d}, \bgamma')$, the given terms represent the local points to define the constraints (1.2-a)-(1.2-c) in (P1.2).
	Note that, only the variables related to the SCA approach are denoted for brevity.
	Consider a solution in the $\ell$-th iteration given as $f(\bF^{(\ell)},\bff_\mathrm{d}^{(\ell)}, \bgamma^{(\ell)})$.
	By defining the surrogate problem (P1.2) with the local points as $\bF^{(\ell)}, \bff_\mathrm{d}^{(\ell)},$ and $\bgamma^{(\ell)}$, we get $f(\bF^{(\ell)},\bff_\mathrm{d}^{(\ell)}, \bgamma^{(\ell)}) = g(\bF^{(\ell)}, \bff_\mathrm{d}^{(\ell)}, \bgamma^{(\ell)} | \bF^{(\ell)}, \bff_\mathrm{d}^{(\ell)}, \bgamma^{(\ell)})$, where the equality holds since the objective of both problems are the same.
	In the next iteration, due to the convexity of (P1.2), the optimal solution can be derived as $g(\bF^{(\ell+1)}, \bff_\mathrm{d}^{(\ell+1)}, \bgamma^{(\ell+1)} | \bF^{(\ell)}, \bff_\mathrm{d}^{(\ell)}, \bgamma^{(\ell)})$, where $\bF^{(\ell+1)}, \bff_\mathrm{d}^{(\ell+1)}$ and $\bgamma^{(\ell+1)}$ are the optimal values of the problem (P1.2).
	Since the objective functions are the same, and the constraints of (P1.1) subsume the constraints of (P1.2), it is obvious that $f(\bF^{(\ell+1)}, \bff_\mathrm{d}^{(\ell+1)}, \bgamma^{(\ell+1)}) = g(\bF^{(\ell+1)}, \bff_\mathrm{d}^{(\ell+1)}, \bgamma^{(\ell+1)} | \bF^{(\ell)}, \bff_\mathrm{d}^{(\ell)}, \bgamma^{(\ell)})$. 
	Therefore, we can conclude that the objective is non-increasing as
	\begin{align}
		f(\bF^{(\ell+1)}, &\bff_\mathrm{d}^{(\ell+1)}, \bgamma^{(\ell+1)}) \notag \\
		&= g(\bF^{(\ell+1)}, \bff_\mathrm{d}^{(\ell+1)}, \bgamma^{(\ell+1)} | \bF^{(\ell)}, \bff_\mathrm{d}^{(\ell)}, \bgamma^{(\ell)}) \notag \\
		&\leq g(\bF^{(\ell)},\bff_\mathrm{d}^{(\ell)}, \bgamma^{(\ell)} | \bF^{(\ell)}, \bff_\mathrm{d}^{(\ell)}, \bgamma^{(\ell)}) \notag \\
		&=f(\bF^{(\ell)},\bff_\mathrm{d}^{(\ell)}, \bgamma^{(\ell)}).
	\end{align}
	Since the solutions are non-increasing, and the objective has a lower bound due to the QoS constraints, iteratively solving (P1.2) provides a local optimal solution for the original problem (P1). 
	The same convergence study is applied for ECO and EDT.
	
	\subsection{Complexity Analysis}
	We first analyze the complexity of GENIE. 
	The complexity of (P1.2), solved by convex optimization tools such as CVX or fmincon from MATLAB is quite high due to the exponential constraints \eqref{eq:1.2-a}-\eqref{eq:1.2-c}.
	Albeit, the constraints can be equally expressed as second order cone (SOC) constraints via the SCA approach \cite{SOCP1}. 
	The resulting problem is an SOC programming (SOCP) problem having the complexity of $\cO ([KNT(1-p)]^{3.5})$ \cite{RSMA2}, where $p$ is the probability of channel blockage. 
	With the threshold of convergence as $\epsilon$, the worst case complexity of GENIE is derived as $\cO ([KNT(1-p)]^{3.5} \log(\epsilon^{-1}))$.
	For ECO and EDT, since the optimization is held for each time block, the complexities of ECO and EDT are both $\cO ([KN(1-p)]^{3.5}T \log(\epsilon^{-1}))$.
	We observe that, additional to the practicality of ECO and EDT, both techniques also enjoy the advantage of decreased complexity.
	This advantage is more emphasized when the number of time blocks is large.
	
		\begin{figure}[t] 
		\centering
		\includegraphics[width=1 \columnwidth]{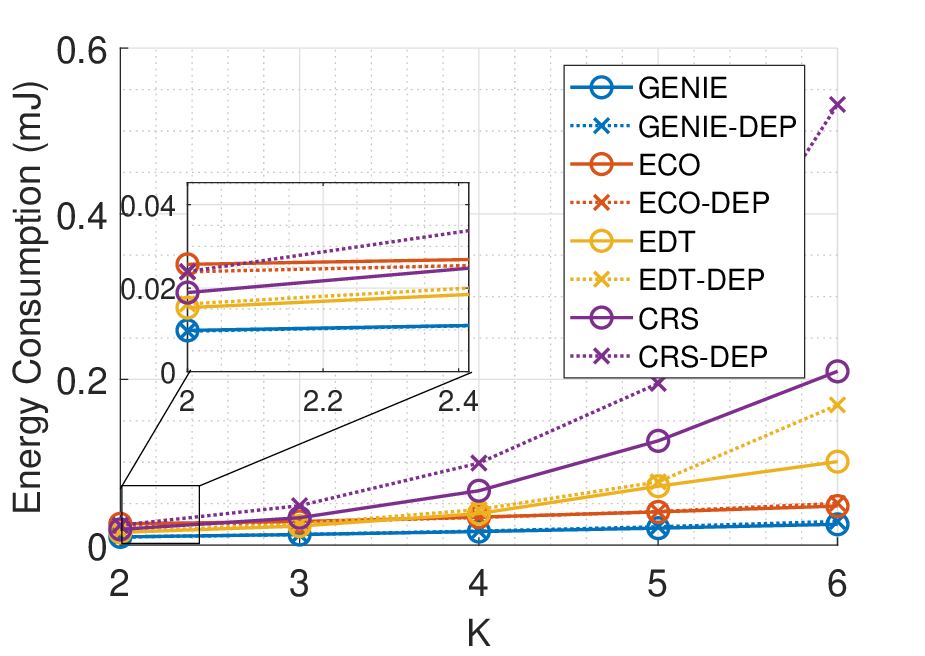}
		\caption{The energy consumption with respect to the number of mUEs, where $T = 20, D_k = 10, D_\mathrm{d} = 2,$ and $s = 0.7$.} 
		\label{fig:AboutK}
	\end{figure}
	\section{Simulation Results} \label{sec: simul}
	In this section, we show the performances of our proposed techniques through simulations. 
	The carrier frequency, duration of a single time block, bandwidth, and path-loss exponent are fixed as $f_\mathrm{c} = 0.3 $ THz, $\tau = 0.4$ ms, $B = 1$ GHz, and $\eta = 2$, respectively.
	The AP and dUE are located at $[0,4,1]$ m and $[8,4,0]$ m, respectively, and the mUEs are assumed to be uniformly distributed for each time block in a box with diagonal coordinates $[2,0,0]$ m and $[6,8,0]$ m.
	The number of AP antennas is $N=16$, and all the channel links have an independent probability of blockage $p = 0.3$ for each time block.
	We simulate three proposed techniques, namely, GENIE, ECO, and EDT.
	In addition, we simulate a benchmark adopting the CRS framework \cite{RSMA2, RSMA3}, where a common stream containing the messages for the mUEs and dUE is relayed to the dUE.
	To observe the effectiveness of the iDeCRS framework and multiple time block consideration, CRS is simulated with the objective function and throughput constraints equal to EDT.
	For fairness, we assume ECO, EDT, and CRS transmit their residual data at the $T$-th time block and that the $T$-th time block has no blockage.

	In Fig. \ref{fig:AboutK}, we plot the energy consumption with respect to the number of mUEs $K$. 
	To observe the effect of correlated blockage, we also simulate the proposed techniques when the blockage probability is dependent to the distance with probability $p = 0.5 d / d_{\max}$, where $d$ is the distance between nodes and $d_{\max}$ is the maximum distance in the simulation environment.
	The distance-dependent blockage cases are denoted as `-DEP.'
	Trivially, the energy consumption increases as the number of mUEs increases for all techniques since this implies an increase of serving UEs.
	We notice that GENIE acts as a lower bound as we expected.
	We observe that EDT is close to GENIE when the number of mUEs is small, but the gap increases as the number of mUEs increases.
	This shows that EDT works well in simple scenarios such as the toy example in Proposition 1.
	The superior performance of ECO compared to EDT, especially when there are many mUEs, implies the benefits of smart resource allocation in multiple time blocks.
	While EDT considers the residual data, it does not consider the channel conditions of each time block.
	In contrast, by considering the quality of the channels of each time block, ECO shows good performance.
	We observe that CRS has higher energy consumption than ECO and GENIE, showing the effectiveness of resource allocation throughout the multiple time blocks.
	Also, the energy consumption of CRS is strictly higher than EDT.
	This is because unlike the eCRS framework, the CRS framework does not extract the dUE message when the mUEs relay the messages, causing the dUE to decode messages unrequired for the dUE.
	Finally, the proposed techniques with distance-dependent blockage follow the same tendency as the independent blockage cases. 
	We also observe that EDT-DEP and CRS-DEP has higher energy consumption than EDT and CRS when $K$ becomes larger.
	For the distance-dependent blockage case, the number of AP-mUE-dUE links is smaller since when the AP-mUE blockage probability is low, the mUE-dUE blockage probability is likely to be high, and vise versa.	
	In consequence, the dUE data is transmitted inefficiently with EDT and CRS compared to ECO or GENIE, signifying the importance of resource allocation.

	\begin{figure}%
		\subfloat[CDF of ECO.]{{\includegraphics[width=0.48\textwidth ]{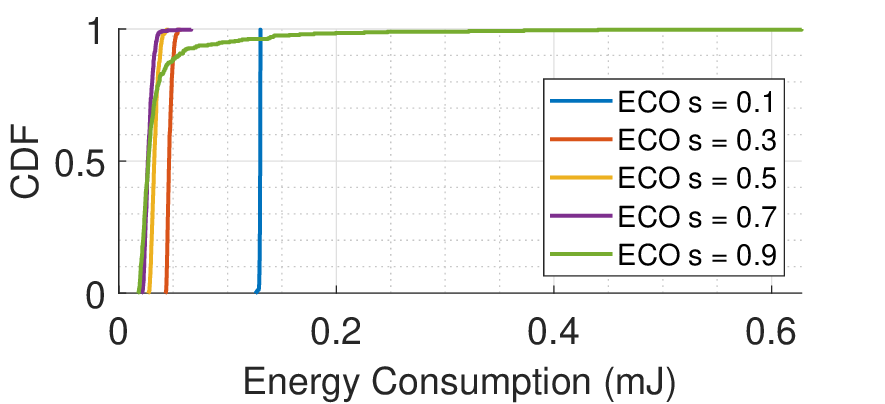} }}%
		\hfil
		\subfloat[Average energy consumption of ECO.]{{\includegraphics[width=0.48\textwidth ]{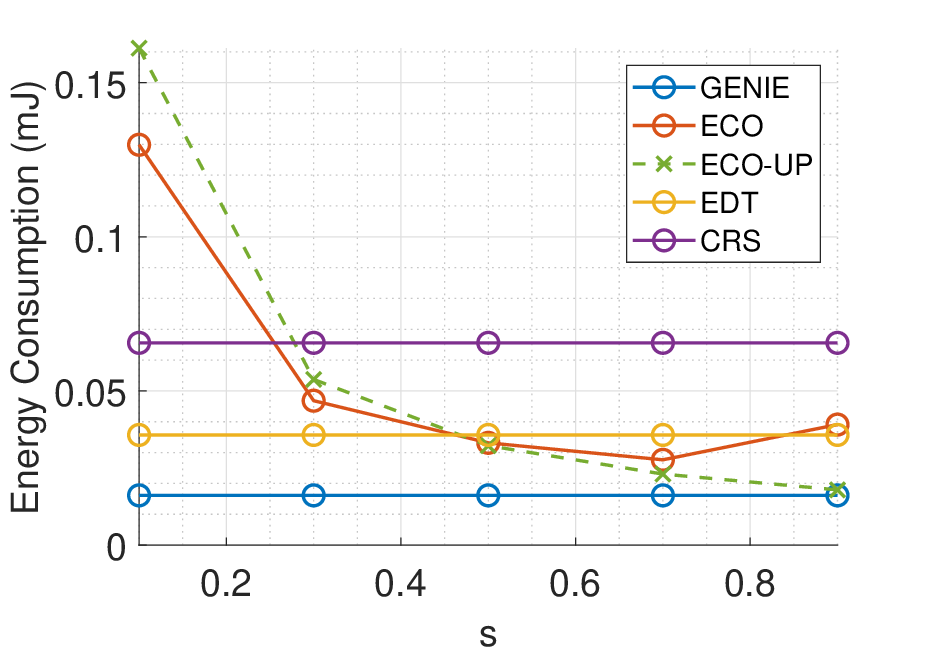} }}%
		\caption{The energy consumption with respect to the hyper parameter $s$, where $T = 20, D_k = 10, D_\mathrm{d} = 2,$ and $K = 4$.}%
		\label{fig:AboutSDCF}%
	\end{figure}

	In Fig. \ref{fig:AboutSDCF} (a), we plot the cumulative density function (CDF) of the energy consumption with respect to the hyper parameter $s$ for ECO. 
	The overall tendency appears to follow our prediction.
	For an extremely low value of $s=0.1$, the performance has no fluctuation due to the loose constraint but the overall performance is low.
	As the efficiency constraint becomes more stringent, the performance increases, but the fluctuation of the instances become larger.
	Due to the stringent constraint, there are more instances where the throughput requirements are not sufficed before the last time block.
	Therefore, a significant amount of energy is used to transmit all the residual data at once at the last time block.
	We observe that a balance between the fluctuation and performance must be found to achieve the minimum \textit{average} energy consumption for a fixed $s$ value. 
	
	In Fig. \ref{fig:AboutSDCF} (b), we plot the average energy consumption with respect to the hyper parameter value $s$. 
	We also plot the energy consumption upper bound for ECO in \eqref{eq: ECO bound} that is effective for low $s$ values, denoted as ECO-UP.
	The average energy consumption of GENIE, EDT, and CRS are constants since they are not affected by $s$.
	As anticipated, the energy consumption is high for low $s$ values, but decreases as $s$ increases.
	The energy consumption then saturates and increases as $s$ becomes too high.
	We observe that the derived bound ECO-UP successfully bounds the performance of ECO for low $s$ values, but starts to deviate as $s$ increases due to the high fluctuation shown in Fig. \ref{fig:AboutSDCF} (a).
	To locate the region of $s$ where the derived bound holds for each instance, a quantitative analysis was conducted to obtain the probability of when the energy consumption is larger than the derived bound.
	Through the results, we confirmed that the derived bound holds until $s \leq 0.3$ with less than a $2$ percent probability of the energy consumption exceeding the derived bound.
	Through the bound and $\Delta$, the hyper parameter $s$ can be chosen considering the usage environment.
	For instance, the hyper parameter $s$ can be chosen slightly high to decrease the average energy consumption if the hardware is robust to energy fluctuation, or can be chosen as a low value otherwise.

	\begin{figure}[t] 
		\centering
		\includegraphics[width=0.95 \columnwidth]{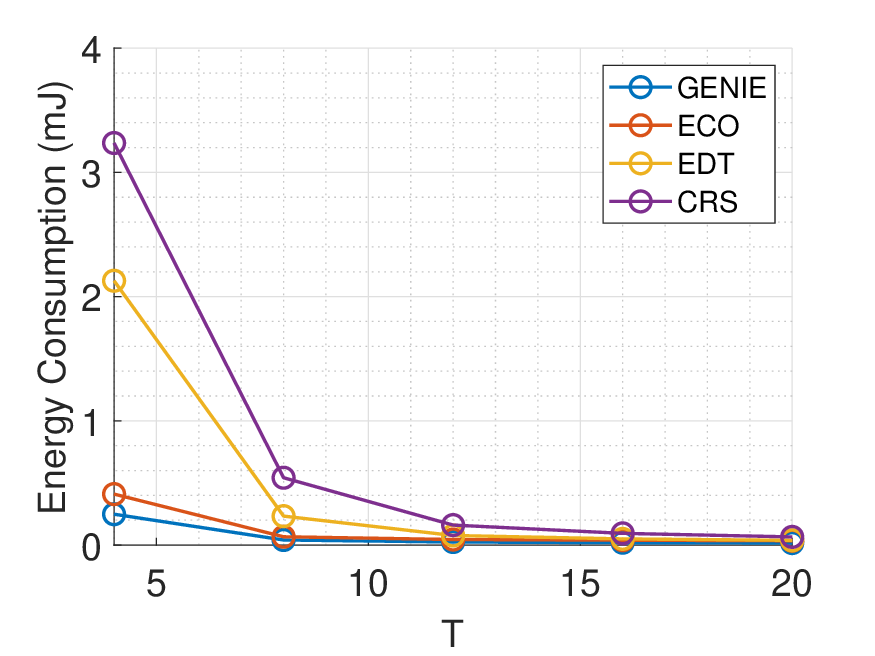}
		\caption{The energy consumption with respect to the number of time blocks, where $s = 0.7, D_k = 10, D_\mathrm{d} = 2,$ and $K = 4$.} 
		\label{fig:AboutT}
	\end{figure}

	In Fig. \ref{fig:AboutT}, we plot the energy consumption with respect to the number of time blocks $T$.
	The energy consumption decreases with the number of time blocks for all techniques, as the data can be spread out.
	Especially, EDT and CRS have a drastic performance increase when the number of time blocks increases.
	This is because with a small number of time blocks, the importance of smart resource allocation is emphasized.
	This shows that ECO is effective for short delay requirement scenarios.
	
	\begin{figure}[t] 
		\centering
		\includegraphics[width=0.95 \columnwidth]{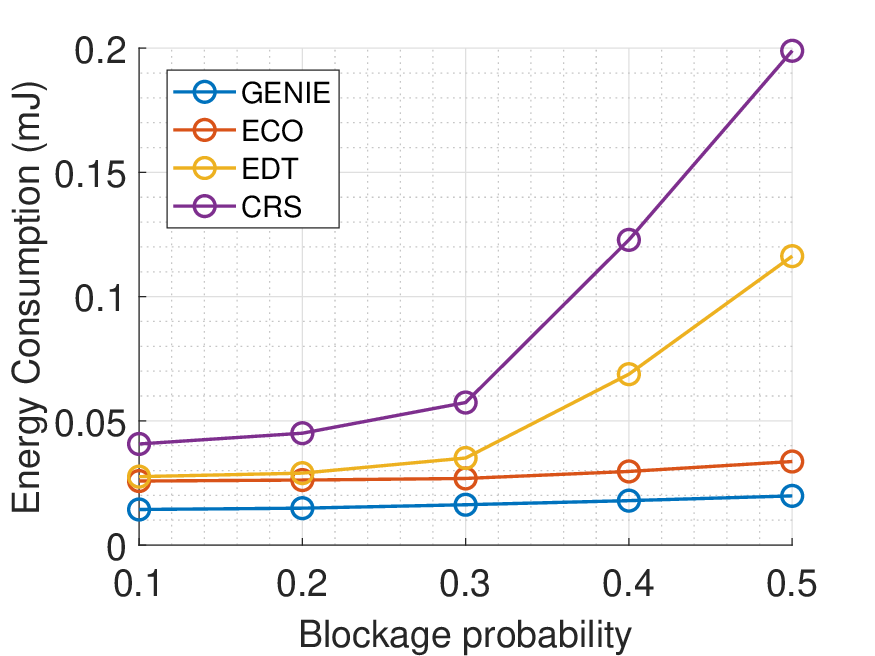}
		\caption{The energy consumption with respect to the blockage probability $p$, where $s = 0.7, D_k = 10, D_\mathrm{d} = 2, T = 20,$ and $K = 4$.} 
		\label{fig:AboutP}
	\end{figure}
	
	In Fig. \ref{fig:AboutP}, we plot the energy consumption with respect to the blockage probability $p$.
	Since the increase in blockage probability means a decrease in the number of usable channels, the energy consumption of all techniques increases as the blockage probability increases.
	Similar to Fig. \ref{fig:AboutT}, we observe that ECO is robust and can handle the increased randomness from the blockage, whereas EDT and CRS do not adapt well to the increased randomness. 

	\begin{figure}[t] 
		\centering
		\includegraphics[width=1 \columnwidth]{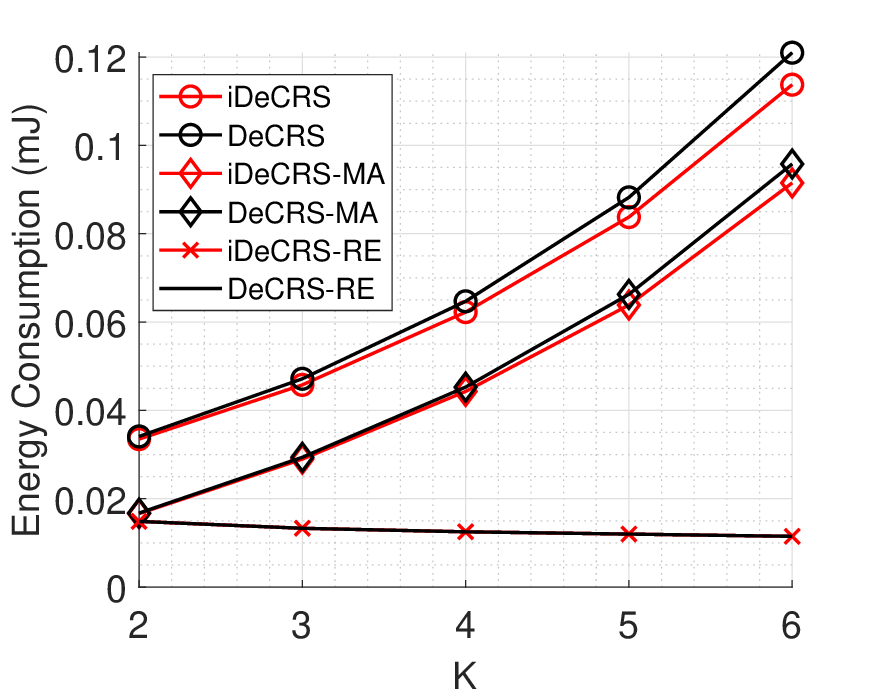}
		\caption{The energy consumption with respect to the number of mUEs, where $T = 20$.} 
		\label{fig:AboutCONV}
	\end{figure}
	
	In Fig. \ref{fig:AboutCONV}, we compare the performances of the iDeCRS framework and the conventional DeCRS framework for different scenarios with genie CSI, where -MA stands for the multiple access scenario without the dUE, and -RE stands for the relaying scenario without the mUE throughput requirements.
	Unless set to zero, $D_k = 20$ and $D_\mathrm{d} = 5$.
	We observe that for all scenarios, the proposed framework outperforms DeCRS, where the performance gap increases as the number of mUEs increases.
	This is because the effect of rate-splitting is more emphasized when more UEs are participating in communication.
	Also, the energy consumption decreases as the number of mUEs increases for iDeCRS-RE and DeCRS-RE since this implies using more relays.
	Finally, the performances of iDeCRS and DeCRS are almost equal in the relaying scenario because of 1) the small required throughput and 2) the fact that the power consumption is mainly from the second phase.
	Considering that both iDeCRS and DeCRS have the order of complexity $\cO ([KNT(1-p)]^{3.5})$ due to the size of the transmit beamformers, iDeCRS outperforms DeCRS with the same hardware conditions and with comparable complexity.
	Overall, we observe that our proposed framework operates well for multiple access and relaying scenarios as well as the cooperative communication scenario.

	\section{Conclusion} \label{sec:concl}
	In this paper, we proposed energy consumption minimization algorithms for high frequency communication systems. 
	To serve permanently blocked UEs, we adopted a CRS approach, where we strengthened DeCRS developed in \cite{Ours}, and proposed the iDeCRS framework.
	To reveal the potential of transmitting over multiple time blocks, we first proposed a energy consumption lower bound which exploits the present and future CSI called GENIE.
	Next, we proposed ECO, a novel efficiency constrained algorithm, followed by EDT, a naive but effective even data transmission approach.
	Through the simulation results, we observed that ECO has good performance when there are many participating mUEs and EDT has considerable performance when many time blocks are available, while GENIE works as a legitimate lower bound.
	The results also show that the proposed techniques are suitable for both multiple access and relay scenarios.
	From the results, we conclude that judiciously exploiting multiple time blocks in high frequencies is promising for energy efficient communication.
	While we took some heuristic approaches to tackle the concept of multiple time blocks in this paper, tailoring other theoretical approaches, e.g., Markov decision process \cite{MDP1}, to our model would be an interesting future work.

	\section*{Appendix A}
	For the function $f(\bx,y) = \frac{\bx^{\mathrm{H}} \bh \bh^{\mathrm{H}} \bx}{y}$, the first-order Taylor expansion is defined as
	\begin{align}
	\tilde{f}(\bx,y) &= f(\bx_0, y_0) + \frac{\partial f(\bx_0, y_0)}{\partial \bx} (\bx - \bx_0) \notag \\+ &\frac{\partial f(\bx_0, y_0)}{\partial \bx^*} (\bx^* - \bx_0^*)+\frac{\partial f(\bx_0, y_0)}{\partial y} (y - y_0),
	\end{align}
	with the local points $\bx_0, y_0$ and the first-order derivatives given as \cite{Diff1}
	\begin{align}
	&\frac{\partial f(\bx_0, y_0)}{\partial \bx} = \frac{\bx_0 \bh \bh^{\mathrm{H}}}{y_0},\\
	&\frac{\partial f(\bx_0, y_0)}{\partial \bx^*} = \frac{\bx_0^\mathrm{T} \bh^* \bh^{\mathrm{T}}}{y_0},\\
	&\frac{\partial f(\bx_0, y_0)}{\partial y} = - \frac{\bx_0^{\mathrm{H}} \bh \bh^{\mathrm{H}} \bx_0}{y_0^2}.
	\end{align}
	By substituting the derivatives, the first-order Taylor expansion is derived as
	\begin{align}
	\tilde{f}(\bx,y) = 2 \mathrm{Re} \left\{\frac{\bx_0^{\mathrm{H}} \bh \bh^{\mathrm{H}} \bx}{y_0}\right\} - \frac{\bx_0^{\mathrm{H}} \bh \bh^{\mathrm{H}} \bx_0}{y_0^2}y,
	\end{align}
	finishing the proof.

	\bibliographystyle{IEEEtran}
	\bibliography{Ref}

\end{document}